\def\blx@maxline{77}
\definecolor{blue75}{rgb}{0,0,.75}
\definecolor{green75}{rgb}{0,.75,0}
\newcommand{\parenthezises}[1]{\arabic{#1}}
\crefname{equation}{}{}
\crefname{enumi}{}{}
\crefname{section}{{\it Section}}{{\it Sections}}
\crefname{subsection}{{\it Subsection}}{{\it Subsections}}
\crefname{subsubsection}{{\it Paragraph}}{{\it Paragraphs}}
\crefname{table}{{\it Table}}{{\it Tables}}
\crefname{figure}{{\it Figure}}{{\it Figures}}
\newtheorem{Theorem}{Theorem}[section]
\crefname{Theorem}{{\it Theorem}}{{\it Theorems}}
\crefname{Definition}{{\it Definition}}{{\it Definitions}}
\newtheorem{Lemma}[Theorem]{Lemma}
\crefname{Lemma}{{\it Lemma}}{{\it Lemmas}}
\crefname{Proposition}{{\it Proposition}}{{\it Propositions}}
\newtheorem{Assumptions}[Theorem]{Assumptions}
\crefname{Assumptions}{{\it Assumptions}}{{\it Assumptions}}
\theoremstyle{definition}
\newtheorem{Remark}[Theorem]{Remark}
\crefname{Remark}{{\it Remark}}{{\it Remarks}}
\crefname{Notation}{{\it Notation}}{{\it Notations}}
\newtheorem{Example}[Theorem]{Example}
\crefname{Example}{{\it Example}}{{\it Examples}}
\def\@fnsymbol#1{\ensuremath{\ifcase#1\or *\or **\else\@ctrerr\fi}}
\begin{document}
\title{Modelling non-local cell-cell adhesion: a multiscale approach}
\author{Anna Zhigun\thanks{School of Mathematics and Physics, Queen's University Belfast, University Road, Belfast BT7 1NN, Northern Ireland, UK, \href{mailto:A.Zhigun@qub.ac.uk}{A.Zhigun@qub.ac.uk}} \ and
    Mabel Lizzy Rajendran\thanks{School of Mathematics, 
Watson Building, University of Birmingham, Edgbaston, Birmingham
B15 2TT, UK, \href{mailto:m.l.rajendran@bham.ac.uk}{m.l.rajendran@bham.ac.uk}} 
}

\date{}
\maketitle
\begin{abstract} 
Cell-cell adhesion plays a vital role in the development and maintenance of multicellular organisms. One of its functions is regulation of cell migration, such as occurs, e.g. during embryogenesis or in cancer.  
In this work, we develop a versatile multiscale approach  to modelling a moving self-adhesive cell population  that combines a careful microscopic  description of a deterministic adhesion-driven motion component with an efficient mesoscopic representation of a stochastic  velocity-jump process. This approach gives rise to mesoscopic models in the form of kinetic transport equations featuring multiple non-localities. Subsequent parabolic and hyperbolic scalings produce  general classes of equations with non-local adhesion and myopic diffusion, a special case being the classical macroscopic model proposed in  \cite{Armstrong2006}. Our simulations show how the combination of the two motion effects can unfold. 

Cell-cell adhesion relies on the subcellular cell adhesion molecule binding. 
  Our approach lends itself conveniently to capturing this microscopic effect. On the macroscale, this results in an additional non-linear  integral equation of a novel type that is coupled to the cell density equation.
\\\\
{\bf Keywords}:  cadherin binding, cell adhesion molecule binding, cell-cell adhesion, cell movement, diffusion-adhesion equations,  hyperbolic scaling, kinetic
transport equations,  multiscale modelling, myopic diffusion, non-local models, parabolic scaling
\\
MSC 2020: 
35B27 %
35Q49 %
35Q92 %
45K05 %
92C17 %
\end{abstract}

\newcommand{\mres}{\mathbin{\vrule height 1.6ex depth 0pt width
0.13ex\vrule height 0.13ex depth 0pt width 1.3ex}}

\newcommand{\cb}[1]{{\color{blue}#1}}
 \newcommand{\cred}[1]{{\color{red}#1}}
\newcommand{\cmg}[1]{{\color{magenta}#1}}
\newcommand{\cgr}[1]{{\color{gray}#1}}

\newcommand{\D}{\mathbb{D}}
\newcommand{\E}{\mathbb{E}}
\newcommand{\T}{{\mathbb T}}
\newcommand{\PP}{{\cal P}}
\newcommand{\A}{{\cal A}}
\newcommand{\B}{{\cal B}}
\newcommand{\K}{{\cal K}}
\newcommand{\G}{{\cal G}}
\newcommand{\Y}{{\cal Y}}

\newcommand{\LL}{{\cal L}}
\newcommand{\M}{{\cal M}}
\newcommand{\HH}{{\cal H}}
\newcommand{\R}{\mathbb{R}}
\newcommand{\N}{\mathbb{N}}
\newcommand{\F}{\mathbb{F}}
\newcommand{\V}{\mathbb{V}}
\newcommand{\X}{{\cal X}}
\newcommand{\W}{{\cal W}}
\newcommand{\ve}{\varepsilon}
\newcommand{\cep}{c^{\ve}}
\def\diam{\operatorname{diam}}
\def\dist{\operatorname{dist}}
\def\diver{\operatorname{div}}
\def\ess{\operatorname{ess}}
\def\inner{\operatorname{int}}
\def\osc{\operatorname{osc}}
\def\sign{\operatorname{sign}}
\def\supp{\operatorname{supp}}
\newcommand{\tr}{\operatorname{tr}}
\newcommand{\Om}{{\Omega}}
\newcommand{\oOm}{\overline{\Omega}}
\newcommand{\MR}{{\cal M}(\overline{\Omega})}
\newcommand{\MP}{{\cal M}^+(\overline{\Omega})}
\newcommand{\MG}{{\cal M}_G(\overline{\Omega})}
\newcommand{\MGp}{{\cal M}_G^+(\overline{\Omega})}
\newcommand{\MO}{{\cal M}_{KR}(\overline{\Omega})}

\newcommand{\BMO}{BMO(\Omega)}

\newcommand{\LOne}{L^{1}(\Omega)}
\newcommand{\WO}{W^{1,1}(\Omega)}
\newcommand{\LOnen}{(L^{1}(\Omega))^d}
\newcommand{\LTwo}{L^{2}(\Omega)}
\newcommand{\Lq}{L^{q}(\Omega)}
\newcommand{\Lp}{L^{2}(\Omega)}
\newcommand{\Lpn}{(L^{2}(\Omega))^d}
\newcommand{\LInf}{L^{\infty}(\Omega)}
\newcommand{\WInf}{W^{1,\infty}(\Omega)}
\newcommand{\WI}{W^{1,\infty}(\Omega_0)}
\newcommand{\Ca}{C^{\alpha }(\overline{\Omega})}
\newcommand{\HOneO}{H^{1,0}(\Omega)}
\newcommand{\HTwoO}{H^{2,0}(\Omega)}
\newcommand{\HOne}{H^{1}(\Omega)}
\newcommand{\HTwo}{H^{2}(\Omega)}
\newcommand{\HmOne}{H^{-1}(\Omega)}
\newcommand{\HmTwo}{H^{-2}(\Omega)}

\newcommand{\LlogL}{L\log L(\Omega)}

\def\avint{\mathop{\,\rlap{-}\!\!\int}\nolimits} 

\newcommand{\om}{\omega}
\newcommand{\tu}{\tilde{u}}
\newcommand{\tw}{\tilde{w}}

\newtheorem{Step}{Step}
\makeatletter
\@addtoreset{Step}{Theorem}
\makeatother
\numberwithin{equation}{section}

\section{Introduction}\label{SecIntro}
\subsection{Biological background}\label{IntroBB}
Development and functioning of multicellular organisms crucially depend on cell-cell adhesion (CCA).  This is the process of cells binding to their neighbours to form  multicellular complexes by building cell-cell junctions. Formation of new tissues and organs during embryogenesis as well as their maintenance, be it as part of homeostasis or during wound healing, all rely on CCA. Alteration of CCA is linked to cancer invasion and metastasis   \cite{HanahanWeinberger2011,FriedlAlexander2011}. 

Various types of cell-cell junctions exist, each enabling a specific adhesion functionality. Mainly responsible for keeping cells together are adherens junctions \cite[Chapter 19]{Albertsetal}. They are facilitated by a particular type of cell adhesion molecule (CAM),  an adhesion-mediating transmembrane protein, called cadherin. Cadherins require extracellular calcium ($Ca^{2+}$) in order to form junctions. Through catenins, a family of intracellular proteins, cadherins are indirectly connected to actin filaments that are part of the cell cytoskeleton. Since individual cadherin bonds are rather weak, many such bonds need to be established in parallel to secure a strong  anchoring junction between two cells. A comprehensive description of these junctions can be found in \cite[Chapter 19]{Albertsetal}. 

Depending on other chemical factors, cadherins either suppress   migration out of resting epithelium or support collective invasion. While E-cadherin is responsible for strong bonds in the former case, various members of the cadherin family that have weaker adhesion strengths than E-cadherin, such as N-cadherin,
 are mainly observed when migration occurs. Loss of E-cadherin, the main cell-cell binding CAM in epithelial cells, is believed to be a fundamental event in the epithelial-mesenchymal transition (EMT), a process by which cells switch from epithelial to mesenchymal stem type. In cancer, EMT enables invasion, the precursor of metastasis. 
 Another key  adhesion type is cell-tissue adhesion, yet in this work we concentrate solely on CCA.  
 We refer to \cite{FriedlAlexander2011,HanahanWeinberger2011}, as well as references in these sources, for further  details on the role of cell adhesion in cancer.

Motivated by the biological observations outlined above, our main aim in this paper is to derive 
two prototypical classes of continuum   mathematical models for a diffusion-advection-driven motion of a self-adhesive cell population  in a heterogeneous environment.
The model in  \cite{Armstrong2006} is regained as a special case of one of these models,  \cref{PL}. This equation does not account for calcium-mediated cadherin binding dynamics, whereas our novel model \cref{MacroModel} includes such dynamics. 

\subsection{Modelling background}\label{SecIntroMod}
Reaction-diffusion-advection  equations with a density-dependent non-local advection  velocity in the form of a spatial integral 
are a popular choice when it comes to modelling adhesion on the level of population densities. Such non-local terms indirectly account for cell-cell interactions  through the effect that they have on the bulk motion.  
Starting from the integro-partial differential equation (IPDE) that was proposed in \cite{Armstrong2006} many extensions of that adhesion model  were developed and treated mathematically rigorously and numerically, see \cite{reviewNonlocal2020} and references therein. Numerical simulations confirm that models of this sort reproduce aggregate formation caused by CCA.

However, it is hardly possible to capture important information that needs to be passed from lower scales to obtain a realistic model if  modelling is done directly on the level of densities. More accurate models are obtained by zooming to the cellular or even subcellular levels and/or the level of cell density distributions and subsequently performing an upscaling. Several   approaches to such derivations were adopted in the context of IPDE (diffusion)-adhesion models.  
 We briefly review them. 
\begin{enumerate}[label=M\arabic*,ref=M\arabic*]
\item\label{SDEappr}
One possible starting point is a system of a large number of first-order stochastic differential equations (SDEs). 
Each of the SDEs describes the temporal evolution of the spatial position of a single population member  that interacts with other individuals and is influenced by stochastic fluctuations, typically in the form of a Gaussian white noise. Interaction, which is often a combination of multiple  effects, is characterised by an appropriate interaction potential. 
In the cases where one preliminarily considers a system of second order SDEs for spatial positions, simplifying assumptions are  made in order to reduce to the first-order as above.

Using empirical measures and It\^o's formula, one can deduce the corresponding mean field stochastic IPDE for population density that contains a non-locality in the drift term. Choosing suitable scalings in each part of the interaction potential allows to remove randomness: as the number of particles tends to infinity,  a fully  deterministic macroscopic mean field IPDE is obtained. The interaction potential contributes with non-local advection and/or the non-linear part of diffusion. Stochastic fluctuations produce a linear diffusion component.
For a detailed discussion of this approach to derivation of  non-local diffusion-advection models spanning from modelling aspects to rigorous mathematical treatment we refer to   \cite{MorCapOel2005,MorCapOel1998} as well as references therein. 

In \cite{Middleton}, two classes of mean field IPDE models of adhesion  were derived while keeping the number of individuals finite. This was achieved by imposing  closing relations. The mean-field approximation yielded an IPDE of the same form as in \cite{Armstrong2006}, whereas the Kirkwood superposition approximation produced a non-standard system of two strongly coupled IPDEs.

{No application of the described approach has dealt with CAM binding dynamics or comparable effects.} 
 \item\label{Masterappr} Assuming that cell motion follows a space-jump random process, the  evolution of the population density can be modelled by a Master equation. The time derivative of the density  is then given by a spatial integral operator   governed by a redistribution kernel that describes the probability of jumping from one position to another. 
Under suitable assumptions on the redistribution kernel one can rescale the equation and then pass to the limit letting the jump length tend to zero, with the result being a diffusion-advection PDE. This approach allowed to formally derive IPDE diffusion-adhesion models in  \cite{ButtenHGP2018}, see also references therein on further details regarding  derivation and scaling of Master equations. 
 
 In \cite{ButtenHGP2018}, the redistribution kernel was split into the even and odd components, leading to  a myopic diffusion and advection, respectively. The odd component was generated by the so-called cell's  polarization vector. This vector was assumed to be a superposition of  local adhesion strengths  generated in small test volumes in the cell's environment. %
 The local adhesion strength was assumed to be proportional to: the  distance to the cell, the available space, and the amount of bound CAMs. Depending on the way the latter evolution was modelled, one obtained adhesion velocities proportional to either single or double spatial integrals. A special case of the first option led to the model in  \cite{Armstrong2006}. 
 The subcellular binding-unbinding dynamics of CAMs was described by quickly  equilibrating ordinary differential equations (ODEs). Their coefficients were obtained from local densities or their integrals. 
 \item\label{KTEappr} If it can be assumed that cell motion follows a velocity-jump process, then a kinetic transport equation (KTE) lends itself to the  description of the evolution of the mesoscopic cell density with respect to time, position, velocity, and, possibly, other variables (see \cref{KTAPmethod} below). 
{In the absence of source terms, it} takes the form of an IPDE  where the differential transport term captures the  deterministic directed movement and a velocity integral term cumulates the effect of stochastic fluctuations due to switching from one velocity to another. The later term is the  turning operator based on a turning kernel that gives the probability of velocity  switches. 
 Removing the turning operator would leave us with a mesoscopic mean field equation of the form of a conservative transport equation (CTE). This equation can be obtained from a microscopic ODE system describing deterministic cell movement using empirical measures, i.e. in the same fashion as described in \cref{SDEappr}. It is commonly assumed that velocity changes are purely stochastic in nature, implying the absence of velocity-induced transport. However, a few works have introduced such a transport term into the model \cite{ZSMM,DKSS2022,CKSWNS2021,Chauviere2007}. {We mention in passing that in the physical context similar equations exist. For instance, the linear Boltzmann-Maxwell equation can describe a gas
of charged particles moving under the influence of an external field through an unchanging
background of another type of particles. Yet cell interactions with other cells and lifeless matter in their surroundings are unlike collisions of physical particles with each other or their background. Hence, different kind of forces and interaction kernels need to be considered.} 
 
 Upscaling, i.e. a suitable rescaling of a KTE using a small scaling parameter and a limit procedure while this  parameter is being sent to zero, yields a mean field equation for the macroscopic population density. Standard scalings are the hyperbolic and the parabolic ones. As a rule, the hyperbolic scaling is available without preconditions, whereas the  parabolic scaling is only possible if the first velocity moment of the turning operator tends to zero. In contrast to the parabolic upscaling where diffusion arises directly,  higher-order correction terms need to be included in order to produce diffusion in the hyperbolic scaling case. We refer to \cite{HillenPainter} for a detailed discussion of the two scaling types in the context of movement of living organisms, cells in particular.

 {This approach 
 was applied in the contexts of self-organised animal aggregations \cite{CEH2015,EftimieThesis} and cell dispersal mediated by non-local sensing \cite{LoyPrez20201,LoyPrez20202}.
 In these works, formal derivations of advection(-diffusion) equations were based on KTEs containing transport with respect to the spatial variable alone. To produce non-localities on the macroscale,  turning operators were chosen that contained both velocity and spatial averaging. In the first setting, the focus was on the interaction inside a population in one \cite{CEH2015,EftimieThesis} and two \cite{CEH2015} dimensions (see also references in these works with regard to previous mesoscopic modelling in this context). This multilayered effect was modelled by a turning kernel that is split into constants and samplings, over the whole domain, of several interaction sub-kernels. Along with density- and distance-dependent weightings,  the latter involves  functions  that measure  differences between the previous velocity direction and the directions of: the future velocity and the neighbours' velocity and relative position. 
 The modelling in  \cite{LoyPrez20201,LoyPrez20202} had no dimension restrictions and aimed at describing cell polarization. There, the turning kernel samples density- and distance-weighted measurements of a macroscopic quantity at positions along the future velocity direction. Choosing this quantity to be the cell population density, non-local CCA can be captured. 
 
 A suitable rescaling in \cite{CEH2015} kept the lowest-order part of the kernel velocity-independent, allowing for a parabolic scaling that yielded a PDE with non-localities in both the diffusion coefficient and the adhesion velocity.
 Due to the structure of the kernel, the parabolic scaling in \cite{LoyPrez20201,LoyPrez20202} was only possible under an  assumption that removed the non-locality. 
 
 CAM binding dynamics or comparable effects were not considered in either of these works.}  
\end{enumerate}

In the present work we develop an alternative approach to modelling CCA. Our main goal is to construct a flexible multiscale modelling framework that captures better the  biological observations described in \cref{IntroBB} above. To be precise, we want to ensure that on the macroscale:
\begin{enumerate}[label=(\roman*),ref=(\roman*),series=method]
 \item\label{GoalDiff} the diffusion term originates from stochastic fluctuations;
 \item the adhesion term is non-local,  
 \item\label{GoalCam}  microscopic CAM binding dynamics being its source.  
\end{enumerate}
\begin{table}[h!]
 \centering
 \begin{tabular}{l|l|l|l}
 \backslashbox{Method}{Origin of} & adhesion term&  diffusion term& CAM dynamics\\\hline
 \cref{SDEappr}&deterministic&deterministic \& stochastic&none \\\hline
 \cref{Masterappr}&stochastic \& CAMs &stochastic&deterministic \\\hline
 \cref{KTEappr}&stochastic&{stochastic}&none\\\hline
 \cref{SecCAMs}&deterministic \&  CAMs&stochastic&deterministic
\end{tabular} 
\caption{Comparison of \cref{SDEappr}-\cref{KTEappr} and our {approach}.}\label{TabComp}
\end{table}
\cref{TabComp} compares the above outlined approaches and our approach in \cref{SecCAMs} (discussed below). The comparison is based on criteria that are related to the declared objectives \cref{GoalDiff}-\cref{GoalCam}. 
 Each of \cref{SDEappr}-\cref{KTEappr} meets the requirements partially, yet fails to meet them all. Note that 
 even though \cref{Masterappr} incorporates CAM binding dynamics, that  dynamics is purely deterministic (as it is generated by ODEs), whereas the adhesion term still has a stochastic origin, as it is derived from a redistribution kernel. 
 
 Since we aim at carefully modelling the CAM binding dynamics, we need to recall a relevant  extension of the KTE framework.
 \begin{enumerate}[label=M\arabic*,ref=M\arabic*,resume=method]
\item\label{KTAPmethod}
 {The kinetic theory of active particles (KTAP) \cite{bellom3} extends \cref{KTEappr} to the settings where there are non-physical 'activity' variables that characterise the state of a cell along with its position and velocity. 
 In the context of cell adhesion, this framework allowed to incorporate  integrin binding dynamics. This  class of CAMs mediates cell-tissue rather than cell-cell interactions. 
 The corresponding models were first developed in \cite{KelkelSurulescu2011,KelkelSurulescu2012} to describe cancer invasion, for which cell-tissue interactions are a prerequisite. Viewing the proportion of bound integrins of a cell to molecules of a signal as an activity variable, the authors derived multiscale systems that couple a KTE for mesoscopic cell density and macroscopic reaction(-diffusion) equations for chemical signals. This approach was taken further in \cite{EHKS}, where a formal upscaling was performed.  Binding and unbinding of integrins was assumed to equilibrate very quickly. For other, much slower, processes, a standard parabolic scaling was adopted. The result was an equation containing myopic diffusion and local advection.
 
For free-swimming cells moving in response to a chemical signal, similar KTEs were constructed with  active variables being certain characteristics of cell internal state \cite{PST,PSTY}. Experimenting with different types of terms and scalings led to non-standard terms on the macroscale, such as, e.g. fractional diffusion \cite{PST} or
 flux-limited chemotaxis \cite{PSTY}. In these works, the upscaling was done in a rigorous way.

 So far, no non-local interactions of activity variables on the microscale have been considered. }
 \end{enumerate}
 
Inspired by \cref{SDEappr}-\cref{KTAPmethod} as well as  approaches to microscale modelling of the deterministic portion of cell motion preceding a KTE in \cite{DKSS2022,ZSMM} and to KTE upscaling in \cite{ZSMM}, we use a multiscale approach to formally derive two classes of non-local CCA models: firstly, a basic model without CAM binding dynamics in \cref{SecBasis} and, secondly, a considerably more involved model which includes such dynamics in \cref{SecCAMs}. 
Our derivations go through the following series of steps.   
\begin{enumerate}
[label=(\arabic*),ref=(\arabic*)]
\item\label{plan1} Develop a detailed microscopic description of the deterministic part of the evolution of cells and, in the case of the second model, also of their CAMs;
\item\label{plan2} lift the modelling to the mesoscopic level of a CTE using empirical measures;
\item\label{plan3} introduce a turning kernel to account for stochastic fluctuations, yielding a KTE;
\item\label{plan4} perform  the parabolic and hyperbolic  upscalings to obtain macroscopic IPDEs.                                                                                                                                                                                                                                                                                                                       \end{enumerate}
In both cases, the resulting macroscopic cell density satisfies an IPDE with myopic diffusion and non-local adhesion. In our second model \cref{MacroModel} the adhesion strength is proportional to the fraction of bound CAMs. This quantity satisfies, together with the cell density, a novel non-linear integral equation. 

Our strategy benefits from the accuracy and flexibility allowed by both the microscale modelling of the deterministic motion component and the mesoscale modelling of stochastic velocity changes. In particular, it allows to avoid direct modelling and handling of stochasticity. They are part of method \cref{SDEappr} and are often challenging.

We would like to stress that this work aims at developing a modelling framework and an  understanding of what type of CCA models we should expect on the macroscale. We do not address a specific situation that would correspond to a concrete experiment. 
Another word of caution concerns the upscaling procedures, which are only done formally. A rigorous verification, such as was carried out in \cite{ZSMM}, is beyond the scope of the present work.

\bigskip
The remainder of the paper is organised as follows. After introducing some notation in \cref{SecNot}, we first derive a basic model without CAM binding dynamics in \cref{SecBasis} and then a more involved model that includes such dynamics in  \cref{SecCAMs}. 
In \cref{SecNum}, we present and discuss the results of one-dimensional simulations for the model in \cref{SecBasis}. Finally, we discuss and summarise our findings in \cref{SecDisc}.

\section{Notation}\label{SecNot} 
In this Section we introduce some notations that are used throughout this paper. 
\begin{itemize}
\item  We denote by $|\cdot|$ the length of a vector but also the volume of a set in $\R^d$, $d\in\N$.  
\item For $\rho>0$ we set
\begin{align*}   
B_{\rho}:=&\{\theta\in\R^d: |\theta|< \rho\},\\ 
\overline{B}_{\rho}:=&\{\theta\in\R^d: |\theta|\leq \rho\},\\
S_{\rho}:=&\{\theta\in\R^d: |\theta|= \rho\}.
\end{align*}

\item Several physical variables have their traditional meaning, i.e. $t\in[0,\infty)$, $x\in\R^d$, and $v\in\R^d$ stand for time, position in space, and velocity, respectively, the space dimension being $d\in\N$. In the context where these and, in \cref{SecCAMs}, yet  another variable, $y$, serve as independent variables, we refer to $t$ and $x$ as macroscopic variables, $v$ and $y$ being referred to as non-macroscopic or mesoscopic.

\item Convolution with respect to variable $x$ is denoted by $\star$.

\item  When integrating with respect to a variable $w\in W\subset\R^k$, $k\in\N$, we use the notation $$\overline{c}^{w}:=\int_W c\,dw$$
if $c$ is a function defined on $W$ and it is evident from the context what $W$ is. Similarly, if $c$ is a measure on $W$, we set
$$\overline{c}^{w}:=\int_W c(dw)$$
to be the integral with respect to that measure. 
If $w$ is a vector consisting of {\it all} non-macroscopic variables, we omit the upper index and write $\overline{c}$ instead. We refer to a density moment that is obtained through such integration as  macroscopic moment.

\item If $z_0$ is a point in $\R^k$, $k\in\N$, then $\delta_{z_0}$ denotes the Dirac delta distribution centred at $z_0$.

\item We omit arguments of functions in many instances in order to simplify the notation.\end{itemize}

\section{Modelling without CAM binding dynamics}\label{SecBasis}
In this Section we formally derive a basic non-local diffusion-adhesion model under the assumption that adhesion originates directly from cell-cell interactions, thus ignoring the subcellular CAM binding dynamics at this stage. We follow steps \cref{plan1}-\cref{plan4} outlined in \cref{SecIntroMod}.
\subsection{Microscale model}\label{SecMicro1}
Similar to \cite{ZSMM,DKSS2022,CKSWNS2021}, we  begin with a detailed description of the deterministic part of the cell movement on the microscale which includes acceleration due to external forces.  
Let a population of a large number $$1\ll N\in\N$$ of cells be modelled as points with position and velocity coordinates
\begin{align*}
 (x_i,v_i)\in \R^d\times \R^d,\qquad i\in\{1,\dots,N\}.
\end{align*}
Following Newton's second law, we set up an initial value problem (IVP) for an ODE system that describes their  motion: 
\begin{subequations}\label{micro}
\begin{align}
&\frac{dx_i}{dt}=v_i, \\
&\frac{dv_i}{dt}=-av_i
+\chi(\cdot,x_i)\frac{1}{N}\sum_{\underset{j\neq i}{j=1}}^N\nabla_x H_r(x_i-x_j),\label{Eqvi}\\
&(x_i,v_i)(0)=(x_{i0},v_{i0}),\
\end{align}
where $(x_{i0},v_{i0})\in\R^d\times \R^d$,  $i\in\{1,\dots,N\}$, $a,r>0$,
\end{subequations}
\begin{align}
  H_r(x):=
\frac{1}{r|B_r|}\int_{\min\{|x|,r\}}^rF(s)\,ds,\label{AdhPot}
\end{align}
and
\begin{align*}
F:[0,r]\rightarrow [0,\infty),\qquad \chi:[0,\infty)\times\R^d\rightarrow [0,\infty),\ \chi=\chi(t,x),       
\end{align*}
are some continuous
functions.

 As in \cite{DKSS2022,ZSMM}, the term $(-av)$ on the right-hand side of \cref{Eqvi} is included to describe the acceleration (rather, deceleration) due to the viscous force. Following Stokes\rq\ law, we take it to be proportional to the velocity of the cell. 
 
 The reminder of the right-hand side of \cref{Eqvi} describes the acceleration due to CCA forces. This diverts from the choices made in  \cite{ZSMM,DKSS2022,CKSWNS2021} where the external forces that acted on cells were local and solely due to macroscopic signals. It is also different from \cref{SDEappr} because no simplifying assumptions are made that would allow  to reduce the ODE system \cref{micro} to a single first-order ODE for $x_i$.

 The adhesion force, a special case of the interaction force, is the sum of forces due to interaction with individual cells within reach. The scaling by $1/N$ before the sum in \cref{Eqvi} corresponds to the mean field assumption that we adopt here. Similar to \cite{Armstrong2006,Middleton,ButtenHGP2018} and many other works, we assume that interaction occurs only within a sensing region that has the form of a ball of a fixed sensing radius $r$ centred at the cell's position. The case of a more realistic sensing radius that could be a function of the physical variables, as proposed in \cite{LoyPrez20202}, is not considered here. To account for a possible spatial heterogeneity of sensitivity to adhesion, we  multiply by a parameter function $\chi$ instead. Note that in general the resulting interaction kernel
 \begin{align}
  K(t,x,x'):=\chi(t,x)\nabla_xH_r(x-x')\label{K}%
 \end{align}
is not skew-symmetric with respect to the spatial variables $x$ and $x'$ and depends on $t$. Unlike the collision of physical particles that follows Newton's third law, here we allow for non-mechanical influences on the strength of the force one cell exerts on the other. 
 
 As is standard practice, we assume the interaction force between two individual to be proportional to the  gradient of a potential, $H_r$, which we refer to as    adhesion potential.  Function $F$ describes the dependence of the adhesion strength upon the  distance relative to $r$. The chosen domain of integration in \cref{AdhPot} ensures that no interaction occurs outside the sensing region.  
 The gradient of $H_r$ computes to 
 \begin{align}
  \nabla_x H_r(x)=\begin{cases}-\frac{1}{r|B_r|}\frac{x}{|x|}F(|x|)&\text{in }B_r\backslash\{0\},\\
  0&\text{in }\R^d\backslash\overline{B}_r, 
\end{cases}
\label{nablaH}
 \end{align}
 and, unless $F(0)=F(r)=0$, it fails to exist at $0$ and/or on $S_r$. One could avoid this problem by replacing $\nabla H_r$ in \cref{Eqvi} by a function that extends it to the whole of $\R^d$. 
 We ignore the issue, assuming that cells do not accumulate on lower-dimensional sets such as points and spheres of radius $r$. Even if a small proportion of cells happens to be at a distance exactly zero or $r$ from a certain cell at some point in time, the corresponding contribution to the right-hand side of \cref{Eqvi} would then be small due to the factor $1/N$.   
 
  As seen in \cref{SecMacro1} below, the $r$-dependent coefficient $1/(r|B_r|)$ in \cref{nablaH} appears  before the non-local  adhesion term on the macroscale. We give our motivation for its inclusion as well as argue that $F(0)$ needs to be non-zero later in  \cref{SecMacro1}.

Finally, since cell speeds cannot become arbitrary large, we want to ensure that they are contained in the ball ${{B}_{s}}$ for some unattainable upper bound $s>0$. 
 A suitable rescaling turns $s$ into $1$. Thus, from now on we require
 that $v_i$'s  do not leave the velocity space
 \begin{align*}
  V:={B_1}.
 \end{align*}
Basic ODE theory guaranties this under the condition 
\begin{align}
 \frac{1}{r|B_r|}\sup_{[0,\infty)\times\R^d}\chi\sup_{[0,r]} F\leq a\label{AssumpBnd}
\end{align}
due to \cref{nablaH}.
\begin{Remark}[Well-posedness for \cref{micro}]\label{RemWPmicro1}
 System \cref{micro} can be rewritten in the form 
 \begin{align}
  &\frac{dz}{dt}=g(\cdot,z),\nonumber%
  \\
  &z(0)=z_0,\nonumber
 \end{align}
where 
\begin{align*}
 &z:=(x_1^T,v_1^T,\dots,x_N^T,v_N^T)^T:[0,\infty)\to (\R^d\times V)^N,
\end{align*}
and $$g:[0,\infty)\times(\R^d\times V)^N\to [0,\infty)\times(\R^d\times V)^N$$ is obtained by copying into a vector the right-hand sides of the  equations in \cref{micro} in the correct order. 
Let us now assume that both $F$ and $\chi$ are Lipschitz. As discussed above, $\nabla _xH_r$ may fail to exist on a lower-dimensional set in $\R^d$, implying that the classical well-posedness theory of first-order ODE systems cannot be used. Still, we verify in a coming paper that $\nabla H_r$ belongs to the class of the vector-valued functions of bounded variation (BV), i.e. it  possesses derivatives that are signed Radon measures.  Using the chain rule for BV functions \cite[Chapter 3 \S 3.10 Theorem 3.96]{Ambrosio}), one can deduce that $g$ inherits this property on $(\R^d\times V)^N$. Furthermore, it is evident that $g$ is essentially bounded and satisfies
\begin{align*}
 \nabla_{z}\cdot g(z)\equiv-dNa,
\end{align*}
so that the divergence is bounded. Therefore, the theory developed in \cite{Ambrosio2004} provides well-posedness of \cref{micro} in a certain generalised sense. We do not pursue this further here.
\end{Remark}

\subsection{Mesoscale model}\label{SecMeso1}
Our next step is to lift the microscopic model \cref{micro} to the mesoscale and extend it to a full KTE that also accounts for stochastic velocity changes. 

For each $N\in\N$, we introduce the time-dependent empirical measure
\begin{align}
 c_N(t,\cdot,\cdot):=\frac{1}{N}\sum_{i=1}^N\delta_{(x_i,v_i)(t)},\qquad t\in[0,\infty),\label{cN}
\end{align} 
where $(x_i,v_i)$ is the trajectory that the $i$th cell follows in the space-velocity space. 
This measure-valued function is an appropriate description of the mesoscopic population density rescaled so that the total mass is normalized to one. Each distribution $\delta_{(x_i,v_i)}$ models a point mass concentrated at $(x_i,v_i)$, i.e. the density of a cell at $x_i$ with velocity $v_i$. 

Let us assume for a moment that $H_r$ is sufficiently regular. In this  case, the classical ODE theory provides the well-posedness of  \cref{micro}. Moreover, the empirical measure $c_N$ corresponding to the solution of \cref{micro} satisfies in the distributional sense the  mean field PDE
\begin{align}
 &\nabla_{(t,x,v)}\cdot\left(\left(1,v,-av+\chi \nabla_x H_r\star \overline{c_N}\right)c_N\right)=0\label{CTEN}
\end{align}
and, obviously, also the initial condition
\begin{align}
 &c_N(0,\cdot,\cdot)=\frac{1}{N}\sum_{j=1}^N\delta_{(x_{i0},v_{i0})}.\nonumber%
\end{align}
For constant $\chi$, this is well-known, see, e.g. \cite{Golse}. The general case follows with \cref{LemA1} in the Appendix.
However, if $H_r$ is not regular enough, \cref{CTEN} may fail to make sense. In particular, $(\nabla_x H_r\star \bar{c})c$ is in general not well-defined if $\nabla_x H_r$ is not continuous and $c$ is a discrete, hence singular, measure. For the reason detailed in \cref{SecMicro1} we ignore this issue here. 

Since the population number is assumed to be large, we are  interested in the mean field limit as $N\to\infty$. This allows to deal with less concentrated, hence less singular, solutions to \cref{CTEN} that are functions and not discrete measures. Since \cref{CTEN} does not depend on $N$, it is reasonable to expect that this is the equation that is obtained in the limit, i.e. that $c_N$ converges to some $c$ that satisfies
\begin{align}
 &\nabla_{(t,x,v)}\cdot\left(\left(1,v,-av+\chi \nabla_x H_r\star \overline{c}\right)c\right)=0.\label{CTE}
\end{align}

The CTE \cref{CTE} provides the description of the deterministic part of cell movement driven by \cref{micro} on the mesoscopic level. To complete the modelling, we still need to include a term that accounts for stochastic perturbations. Since adhesion is particularly relevant in cancer (see \cref{IntroBB}), we include a turning operator that accounts for chaotic realignment with tissue fibers. Following \cite{ZSMM,HillenPainter}, we choose a basic turning operator 
\begin{align*}
 c\mapsto dq\overline{c}-c
\end{align*}
to illustrate our approach.  Here $q$ models the orientational distribution of tissue fibers and satisfies the following assumptions:
\begin{Assumptions}~
\begin{enumerate}
\item $q:\R^d\times V\to [0,\infty)$ and only depends on $x$ and $\frac{v}{|v|}$;
 \item $\bar{q}=\frac{1}{d}$.
\end{enumerate}
\end{Assumptions}
\noindent This kind of turning operator has been used in many models for cancer migration, see, e.g. references given in \cite{ZSMM}. The resulting mesoscopic equation is 
\begin{align}
 \nabla_{(t,x,v)}\cdot\left(\left(1,v,-av+\chi \nabla_x H_r\star \overline{c}\right)c\right)=dq\overline{c}-c.\label{meso}
\end{align}
It is a blend of a KTE and a mean field equation. This  doubly non-local IPDE accounts for both the  deterministic cell-cell and stochastic cell-tissue interactions.

 Due to  \cref{AssumpBnd,nablaH} we have that
 \begin{align}
  (-av+\chi \nabla_x H_r\star u)\cdot v\leq &-a|v|^2+|v|\sup_{[0,\infty)\times\R^d}\chi\, \underset{B_r}{\sup}|\nabla H_r|\|u\|_{L^1(\R^d)}\nonumber\\
  =&-a+\sup_{[0,\infty)\times\R^d}\chi\, \underset{B_r}{\sup}|\nabla H_r|\nonumber\\
  \leq&0\qquad \text{in } [0,\infty)\times\R^d\times S_1\qquad \text{for }\|u\|_{L^1(\R^d)}=1.
 \end{align}
Consequently,  the characteristics of the transport part of equation \cref{meso} that start in $\R^d\times V$ do not leave this set. Hence, 
\begin{align}
 c=0\qquad\text{in }[0,\infty)\times\R^d\times S_1\label{mesobc}
\end{align}
are admissible boundary conditions for equation  \cref{meso}

\begin{Remark}[Mean field limit for \cref{CTEN}]
  For constant $\chi$, the fact that \cref{CTE} is obtained from \cref{CTEN} in the mean field limit is a direct consequence of the results in \cite{Dobrusin}, provided that $H_r$ is smooth,  and  \cite{JabinWang} if it is not, at least for $a=0$. 
 \end{Remark}

\begin{Remark}[Solvability of \cref{meso}]\label{RemWPmeso}
We are not aware of results on solvability for such doubly non-local non-linear equations as  \cref{meso}. 
\end{Remark}

\subsection{Macroscale model}\label{SecMacro1}
In this Subsection we upscale \cref{meso}, \cref{mesobc} to obtain equations for the  macroscopic cell density. 
To begin with, we introduce a macroscopic rescaling of time and space and of functions depending on them: for $\ve\in(0,1]$ let 
\begin{align*}
 &\hat{t}=\ve^{\kappa}t,\qquad\kappa\in\{1,2\},\\ &\hat{x}=\ve x,\qquad \hat{r}=\ve r,\\
 &\hat{\chi}(\hat t,\hat x):=\chi(t,x),\qquad \hat F(\hat s):=F(s),\qquad \hat{q}(\hat x,v):=q(x,v),\\
 &\cep(\hat t,\hat x,v):=c(t,x,v).
\end{align*}
The values $\kappa=1$ and $\kappa=2$ correspond to the usual hyperbolic and  parabolic scalings, respectively. 
Under the proposed scaling we have
\begin{align}
 \nabla_x H_r\star \overline{c}(t,x)
 =&-\frac{1}{r|B_r|}\int_{B_r}\frac{y}{|y|}F(|y|)\overline{c}(t,x-y)\,dy\nonumber\\
 =&-\frac{\ve^{d+1}}{\hat {r}|B_{\hat{r}}|}
\int_{B_{\frac{\hat{r}}{\ve}}}\frac{y}{|y|}\hat F(\ve|y|)\overline{\cep}\left(\hat{t},\hat{x}-\ve y\right)\,dy
\nonumber\\
 =&-\frac{\ve}{\hat {r}|B_{\hat{r}}|}
\int_{B_{\hat{r}}}\frac{y}{|y|}F(\hat{y})\overline{\cep}\left(\hat{t},\hat{x}-\hat{y}\right)\,d\hat{y}\nonumber\\
=&\ve \nabla_{\hat{x}} H_{\hat{r}}\star \overline{\cep}(\hat{t},\hat{x}).\label{scAdh}
\end{align}
Rescaling \cref{meso,mesobc}, using \cref{scAdh},  and dropping the hats leads to 
\begin{subequations}
\begin{align}
 & \nabla_{(t,x,v)}\cdot\left(\left(\ve^{\kappa},\ve v,-a(v-v^{\ve}_*)\right)\cep\right)=dq\overline{\cep}-\cep,\label{transpce}
 \\
 &\cep=0\qquad\text{in }\R^d\times S_1,\\
 &v^{\ve}_*:=\ve\frac{1}{a}\chi \nabla_x H_r\star \overline{\cep}.\label{bce}
\end{align}
\end{subequations}
Set
\begin{align*}
 &c^{0}:=\underset{\ve\rightarrow0}{\lim} \, \cep,\\
  &c^{0}_1:={\underset{\ve\rightarrow0}{\lim} \,\partial_{\ve}\cep},\\
  &c_{01}^{\ve}:=c^{0}+\ve c^{0}_1.
  \end{align*}
We are interested in obtaining equations for the macroscopic  zero- and first-order approximations of $\cep$, i.e. $\overline{c^0}$ and $\overline{c_{01}^{\ve}}$.
Equation \cref{transpce} has the same form as equation (3.3) in \cite{ZSMM}. However, the term $v^{\ve}_*$ is not exactly of the form we considered in \cite{ZSMM}. Indeed, it depends on variable $t$ and lacks saturation. Still, since it vanishes at $\ve=0$, the very same formal derivation as was done in that work can be carried out in the present case.
In particular, one obtains equations
\begin{align}
(a+1)\partial_t\overline{c^{0}}=\frac{1}{2a+1}\frac{d}{d+2}\nabla_x\nabla_x^T: \left(\D[q]\overline{c^{0}}\right)-\nabla_x\cdot(\overline{c^{0}}\chi\nabla_x H_r\star\overline{c^{0}})\qquad\text{if }\kappa=2\text{ and }\E[q]\equiv0\label{PL}
\end{align}
and
\begin{align}
 &(a+1)\partial_t \overline{c_{01}^{\ve}}
 +\frac{d}{d+1}\nabla_x\cdot\left(\E[q]\overline{c_{01}^{\ve}}\right)\nonumber\\
 =& {\ve}\left(\frac{1}{2a+1}\frac{d}{d+2}\nabla_x\nabla_x^T:\left(\D[q]\overline{c_{01}^{\ve}}\right)-\frac{1}{(a+1)^2}\frac{d^2}{(d+1)^2}\nabla_x\cdot\left(\E[q]\nabla_x\cdot\left(\overline{c_{01}^{\ve}}\E[q]\right)\right)\right)\nonumber\\
 &-\ve \nabla_x\cdot\left(\overline{c_{01}^{\ve}}\chi\nabla_x H_r\star\overline{c_{01}^{\ve}}\right)\nonumber\\
 &+O\left(\ve^2\right)\qquad\text{if }\kappa=1,\label{HLCorr}
\end{align}
where
\begin{align*}
 &\E[q]:=\int_{S_1}\theta q(\theta )\,d\theta,\\
 &\D[q]:=\int_{S_1}\theta  \theta^T q(\theta)\,d\theta.
\end{align*}
Recalling the adhesion operator
\begin{align}
 \A_ru(x)=&\frac{1}{r|B_r|} \int_{B_r}u(x+\xi)\frac{\xi}{|\xi|}F(|\xi|)\,d\xi\label{AdhOper}\nonumber\\
 \end{align}
 from \cite{Armstrong2006} and noticing that 
 \begin{align}
  \A_ru\equiv & \nabla_x H_r\star u,\label{Arconv}
 \end{align}
 we can alternatively rewrite \cref{PL,HLCorr} as follows:
 \begin{align}
(a+1)\partial_t\overline{c^{0}}=\frac{1}{2a+1}\frac{d}{d+2}\nabla_x\nabla_x^T: \left(\D[q]\overline{c^{0}}\right)-\nabla_x\cdot(\overline{c^{0}}\chi\A_r\overline{c^{0}})\qquad\text{if }\kappa=2\text{ and }\E[q]\equiv0\label{PL_}
\end{align}
and
\begin{align}
 &(a+1)\partial_t \overline{c_{01}^{\ve}}
 +\frac{d}{d+1}\nabla_x\cdot\left(\E[q]\overline{c_{01}^{\ve}}\right)\nonumber\\
 =& {\ve}\left(\frac{1}{2a+1}\frac{d}{d+2}\nabla_x\nabla_x^T:\left(\D[q]\overline{c_{01}^{\ve}}\right)-\frac{1}{(a+1)^2}\frac{d^2}{(d+1)^2}\nabla_x\cdot\left(\E[q]\nabla_x\cdot\left(\overline{c_{01}^{\ve}}\E[q]\right)\right)\right)\nonumber\\
 &-\ve \nabla_x\cdot\left(\overline{c_{01}^{\ve}}\chi\A_r\overline{c_{01}^{\ve}}\right)\nonumber\\
 &+O\left(\ve^2\right)\qquad\text{if }\kappa=1.\label{HLCorr_}
\end{align}

 Both \cref{PL_,HLCorr_}  contain the same $q$-dependent terms, such as, e.g. the myopic diffusion
 \begin{align*}
  \nabla_x\nabla_x^T:\left(\D[q]u\right),
 \end{align*}
 as the corresponding equations (3.18) and (3.51) from \cite{ZSMM}. We refer to that work for a discussion of these terms, as well as for the  formulas for the mesoscopic approximations $c^0$ and $c_1^0$.  In contrast to \cite{ZSMM}, our new equations \cref{PL_,HLCorr_} also contain the non-local advection term
\begin{align*}
 -\nabla_x\cdot(u\chi\A_ru)
\end{align*}
of the form originally proposed in \cite{Armstrong2006} to model CCA. That  model is a special case of \cref{PL_} and corresponds to $\chi$ and $q$ being constant. 

Another way to rewrite, e.g. \cref{PL_} is by decomposing the spatial motion into an anisotropic diffusion in the divergence form and advection:
 \begin{align}
(a+1)\partial_t\overline{c^{0}}=\frac{1}{2a+1}\frac{d}{d+2}\nabla_x\cdot\left(\D[q]\nabla_x\overline{c^{0}}\right)+\nabla_x\cdot \left(\overline{c^{0}}\left(\frac{1}{2a+1}\frac{d}{d+2}\nabla_x\cdot \D[q]-\chi\A_r\overline{c^{0}}\right)\right).
\label{PL_2}
\end{align}
In \cref{SecNum} below we show results of some numerical simulations for this model.

 The heuristic analysis in \cite{GerChap2008} as well as  the rigorous study in \cite{EPSZ} showed that in the limit as $r\to0$ the non-local adhesion operator ${\cal A}_r$ approaches the (local) spatial gradient, provided that $F(0)=d+1$. This is the expected limit behaviour and the reason for the factor $1/(r|B_r|)$ in \cref{AdhOper} and the observation  that $F(0)\neq0$, see  \cref{SecMicro1}. 

\begin{Remark}[Non-local operator $\mathring{\nabla}_{r}$] 
 Since our approach to the derivation of the mesoscopic equation \cref{meso} from the microscopic ODE system \cref{micro} is based on empirical measures, it limits the admissible choices of the interaction potential. 
 If we were to start directly on the mesoscopic level and would only accept integrable densities $c$ rather than discrete measures, then we could choose a discontinuous potential such as
 \begin{align}
  H_r(x):=\begin{cases}
           1&\text{in }B_r,\\
  0&\text{in }\R^d\backslash\overline{B}_r. 
\end{cases}
 \end{align}
 In this case, the gradient of $H_r$ is a measure. In higher dimensions $d\geq2$ it is given by
 \begin{align*}
  \nabla_x H_r=-id_x\,dS_r, 
 \end{align*}
 which leads to
 \begin{align}
  \nabla_x H_r\star u=&\int_{S_r}u(x+\xi)\xi\,d\xi\nonumber\\
  =&\mathring{\nabla}_{r}u.
 \end{align}
The latter is the non-local operator that was previously introduced in \cite{othmer-hillen2} to describe non-local chemotaxis. 
\end{Remark}
  
 \begin{Remark}[Solvability of \cref{PL_,HLCorr_}]
  Several works established solvability of non-local diffusion-adhesion equations as well as systems containing them, see \cite{reviewNonlocal2020} and references therein. Yet none of them included the case of a myopic diffusion.  Solvability in the presence of a scalar \cite{WinSur2017} or a tensor \cite{Heihoff} myopic diffusion  has so far been accomplished for models with advection due to haptotaxis, i.e. a directed movement along the spatial gradient of the macroscopic tissue density, rather than adhesion.
  
 \end{Remark}
 \begin{Remark}[Rigorous upscaling of \cref{meso}]\label{Remupscale}
  The presented meso-to-macro upscaling  is formal. 
  Unlike the case that was handled in \cite{ZSMM}, \cref{meso} is a non-linear equation and, as previously observed in \cref{SecMeso1}, the term $u\A_ru$ is not defined for singular measures $u$. Thus, the approach that we developed in \cite{ZSMM} is not directly applicable. Indeed, there we relied on the linearity of the KTE and on the possibility of considering measure-valued solutions. A rigorous upscaling for \cref{meso} remains an open question. 
 \end{Remark}

\section{Modelling with CAM binding dynamics} \label{SecCAMs}
In this Section we derive a new non-local diffusion-adhesion model that takes into account subcellular CAM binding dynamics. We adhere to steps \cref{plan1}-\cref{plan4} outlined in \cref{SecIntroMod}.

\subsection{Microscale model}
The basic microscopic model \cref{micro}  neglects the CAMs binding dynamics, which, as pointed out in the Introduction, is the underlying mechanism of cell-cell binding. In \cite{ButtenHGP2018}, this mechanism  was taken into account. There it was assumed that at each time and position in space a single cell is moving while the rest of the population in its background is effectively standing still. The interactions between the cell and the background population were described by reversible reactions that either discriminate between bound/free CAMs or not. 

In this Subsection,  we exploit the microscopic approach that allows to describe mutual  interactions between the CAMs of a pair of cells. For this purpose, we construct a system of ODEs that includes equations not just for $x_i$ and $v_i$ but also for $y_i$, the proportion of bound CAMs of $i$th cell. This bears resemblance to modelling in  \cite{KelkelSurulescu2011,KelkelSurulescu2012}, although, in our case, the interactions occur inside the population  rather than with an external signal.
 
For the reader's convenience, we  first collect  all involved model  variables and parameters, including those  previously introduced in \cref{SecBasis}:
\begin{itemize}
 \item $1\ll N\in\N$: population number;
 \item $t\in[0,\infty)$: time;
 \item $d\in\N$: space dimension;
 \item  $x_i:[0,\infty)\rightarrow\R^d$, $x_i=x_i(t)$: position of $i$th cell in space;
 \item $V=B_1$: velocity space;
 \item $v_i:[0,\infty)\rightarrow V$, $v_i=v_i(t)$: velocity of $i$th cell;
 \item $a\in(0,\infty)$: cell deceleration rate; 
\item $r\in(0,\infty)$: cell sensing radius;
 \item $R\in (0,\infty)$: total number of CAMs  of a cell;
 \item $y_i:[0,\infty)\rightarrow(0,1)$, $y_i=y_i(t)$: proportion of bound CAMs of  $i$th cell; 
 \item $S:[0,\infty)\times\R^d\rightarrow[0,\infty)$, $S=S(t,x)$: concentration 
of a chemical on which the likeliness to bind/unbind depends;
\item $k^{+}/k^{-}:[0,\infty)\times[0,\infty)\rightarrow [0,\infty)$
: CAM  binding/unbinding rate constants, depend on $S$ and the distance
between interacting cells;
 \item $F:[0,r]\rightarrow[0,\infty)$: distance-dependent component of adhesion force;
 \item $H_r$: adhesion potential as defined by \cref{AdhPot};
 \item $\chi:[0,\infty)\times\R^d\rightarrow[0,\infty)$, $\chi=\chi(t,x)$: adhesion sensitivity.
\end{itemize}
We begin by describing the  cell motion: for  $i\in\{1,\dots,N\}$
\begin{subequations}\label{micro_new}
\begin{align}
\frac{dx_i}{dt}=&v_i, \\
\frac{dv_i}{dt}=&-av_i
+\chi(\cdot,x_i)y_i\frac{1}{N}\sum_{\underset{j\neq i}{j=1}}^N\nabla_x H_r(x_i-x_j).\label{Eqvi_new}
\end{align}
\end{subequations}
 Unlike  \cref{micro}, in \cref{micro_new} the total adhesion force acting on the $i$th cell is taken to be proportional to $y_i$. This implies that the more bonds a cell has, the stronger this force is. 

We make the following assumptions on the CAM binding: 
 \begin{enumerate}[label=(A\roman*),ref=(A\roman*)]
 \item\label{Ass1}a single type of CAMs influences cell motion;
 \item all cells have exactly the same number of these CAMs;
 \item each cell has  both bound and free CAMs at every point in time;
 \item a CAM of a cell can only bind to a distinct CAM of another cell;
  \item  CAMs of a pair of cells can bind only if the distance between the cells  %
  is smaller than the sensing radius $r$; 
  \item if the distance
approaches $r$, then their CAMs unbind;
\item  binding/unbinding kinetics  obeys an analog of the mass action law;
\item\label{Assl}  binding is reversible; the corresponding rates for a pair of cells depends on the distance between them and the concentration of a  chemical at the middle distance.
 \end{enumerate}
An example of an application that we have in mind here is a simplified description of  the formation of adherens junctions through the calcium-mediated cadherin binding, see \cref{SecIntroMod} above.

  Let ${\cal F}_i$ and ${\cal B}_i$ denote respectively the free and bound CAMs of the $i$th cell. 
Then, the above assumptions can be described by the  following \lq reactions\rq: for  all $i,j\in\{1,\dots,N\}$, $i\neq j$,
 \begin{align}
  &\ce{{\cal F}_i + {\cal F}_j <=>[$k^{+}\left(S\left(t,\frac{1}{2}(x_i+x_j)\right),|x_i-x_j|\right)$][$k^{-}\left(S\left(t,\frac{1}{2}(x_i+x_j)\right),|x_i-x_j|\right)$]  {\cal B}_i + {\cal B}_j},
 \label{bunb1}
 \end{align}
where
\begin{align}
 k^{\pm}\equiv 0\qquad \text{in }[0,\infty)\times [r,\infty).\label{kpm0}
\end{align}
It is reasonable to suppose  in this context that the following quantities replace the standard chemical concentrations:
 \begin{align*}
  [{\cal B }_i]:=&\frac{[\text{number of bound CAMs of }i\text{th cell}]}{ [\text{total number of CAMs in population}]\cdot[\text{volume of the sensing region}]}\nonumber\\
  =&\frac{y_iR}{NR|B_r|}\nonumber\\
  =&\frac{y_i}{N|B_r|}
 \end{align*}
 and, similarly,
 \begin{align*}
  [{\cal F }_i]:=\frac{1-y_i}{N|B_r|}.
 \end{align*}
 
 We provide an example of binding/unbinding rates $k^{+}$/$k^{-}$.
 \begin{Example}
For some non-decreasing  $K^+:[0,\infty)\to [0,\infty)$ and non-increasing $K^-:[0,\infty)\to [0,\infty)$ and constants $a^{\pm},b^{\pm}\in(0,\infty)$ set
 \begin{align*}
 &k^{\pm}(S,\rho): =K^{\pm}(S)\varphi^{\pm}(\rho)\qquad\text{for }S,\rho\in[0,\infty),\\
 &\varphi^{\pm}(\rho):=
 \begin{cases}
\left(r^{b^{\pm}}-\rho^{b^{\pm}}\right)^{\pm a^{\pm}}&\text{for }\rho\in[0,r),\\
0&\text{for }\rho\in[r,\infty).
\end{cases}
 \end{align*}
This choice produces a speedy detachment when the cell distance is close to $r$. 
\end{Example}
In order to estimate the  binding/unbinding rates in  a concrete type  of cell-cell binding, the general framework developed in \cite{Bell} can be adopted. There, the binding process is decomposed into two phases: the formation of the encounter complex and the actual bond formation, both being reversible reactions. The reaction rates for the former are described by functions of the cell separating distance and the  translational
diffusion coefficient for CAM motion
in the cell membrane.
Specifically in the case of cadherin binding, higher calcium concentrations correlate with faster diffusion of cadherins in the cell membrane \cite{Leckbandetal2016}. As to the bond formation, it is well-understood \cite[Chapter 19]{Albertsetal} that  calcium is indispensable for   cadherins to achieve the rigid structure that is necessary for them to bind. These are the observations that have led us to assume that $k^{\pm}$ are functions of distance and concentration of a chemical that mediates the binding of CAMs.

 Applying the law of mass action to \cref{bunb1},  we arrive at the following ODE system for the dynamics of $y_i$'s:  
 \begin{subequations}\label{syst_yi}
 \begin{align}
  \frac{d y_i}{dt}=&
  \frac{1}{N}\sum_{\underset{j\neq i}{j=1}}^NG_r[S](t,(x_i,y_i),(x_j,y_j)),\qquad i\in\{1,\dots,N\},
  \label{Eq_yi}
 \end{align}
 where 
 \begin{align}
  &G_r[S](t,(x,y),(x',y')):=G_r^{+}[S](t,x,x')(1-y)(1-y')-G_r^{-}[S](t,x,x')yy',\label{G}\\
  &G_r^{{\pm}}[S](t,x,x'):=\frac{1}{|B_r|}k^{{\pm}}\left(S\left(t,\frac{1}{2}(x+x')\right),|x-x'|\right).\label{Gpm}
 \end{align}
 \end{subequations}
  System  \cref{micro_new}, \cref{syst_yi} is our new microscopic model.

 Since  $k^{\pm}\geq0$, we have that
\begin{align}
 G_r[S](t,(x,0),(x',y'))\geq0,\ G_r[S](t,(x,1),(x',y'))\leq 0\qquad\text{for all }t\in[0,\infty),\ x,x'\in\R^d,\ y'\in[0,1].\label{pos}
\end{align}
Standard ODE theory implies that  $y_i$'s do not leave $[0,1]$.

\begin{Remark}[Well-posedness of \cref{micro_new}, \cref{syst_yi}]
 The well-posedness of \cref{micro_new}, \cref{syst_yi} can be addressed in the same way as for \cref{micro}, see \cref{RemWPmicro1}.  
Depending on the choice of functions $k^{\pm}$ one could avail of the classical ODE existence and uniqueness results or the more general theory from \cite{Ambrosio2004}. In particular, since the  divergence of the right-hand side of the ODE system for $(x_1^T,v_1^T,y_1,\dots,x_N^T,v_N^T,y_N)^T$ computes to 
 \begin{align}
  &-dNa-\frac{1}{N}\sum_{i=1}^N\sum_{\underset{j\neq i}{j=1}}^N(G_r^{+}[S](t,x_i,x_j)(1-y_j)+G_r^{-}[S](t,x_i,x_j)y_j)\nonumber\\
  =&-dNa-\frac{1}{N}\sum_{j=1}^N\left((1-y_j)\sum_{\underset{j\neq i}{i=1}}^NG_r^{+}[S](t,x_i,x_j)+y_j\sum_{\underset{j\neq i}{i=1}}^NG_r^{-}[S](t,x_i,x_j)\right),
 \end{align}
 the essential boundedness of $k^{\pm}$ is necessary for the results from \cite{Ambrosio2004} to apply.
\end{Remark}

 \subsection{Mesoscale model}
 In this Subsection, we lift the microscopic model \cref{micro_new}, \cref{syst_yi} to the mesoscale and extend it to a full KTE that  includes stochastic velocity changes. 
 Similar to \cref{SecMeso1}, we begin by introducing the empirical measures
\begin{align}
 c_N(t,\cdot,\cdot,\cdot):=\frac{1}{N}\sum_{j=1}^N\delta_{(x_i,v_i,y_i)(t)}.\nonumber
\end{align} 
 A formal application of  \cref{LemA1} to system \cref{micro_new}, \cref{syst_yi} and $Z:=(x^T,v^T,y)^T$
  yields that $c_N$ solves in the distributional sense the mean field IVP
  \begin{subequations}
\begin{align}
&\nabla_{(t,x,v,y)}\cdot\left(\left(1,v,-av+\chi y\nabla_x H_r\star \overline{c_N},\G_r[S]\overline{c_N}^{v}-\frac{1}{N}\tr G_r[S]\right)c_N\right)=0,\label{CTEN2}\\ 
&c_N(0,\cdot,\cdot,\cdot)=\frac{1}{N}\sum_{j=1}^N\delta_{(x_i,v_i,y_i)(0)}.\nonumber
\end{align}
\end{subequations}
where
\begin{align*}
 &\G_r[S]u\,(t,x,y):=\int_{0}^1\int_{\R^d}G_r[S](t,(x,y),(x',y'))u(t,x',y')\,dx'dy',\\
 &\tr G_r[S](t,x,y):=G_r[S](t,(x,y),(x,y)).
\end{align*}
Here, as in \cref{SecMeso1}, we ignore the potential discontinuities in the kernels.
Passing formally to the limit as $N\rightarrow\infty$ in \cref{CTEN2}, we arrive at the mean field limit equation
\begin{align}
\nabla_{(t,x,v,y)}\cdot\left(\left(1,v,-av+\chi y\nabla_x H_r\star \overline{c},\G_r[S]\overline{c}^{v}\right)c\right)=0.\label{CTE_new}
\end{align}
To account for chaotic interactions with tissue we use the same turning operator as in \cref{meso}.
The resulting mesoscopic equation is thus:
\begin{align}
 \nabla_{(t,x,v,y)}\cdot\left(\left(1,v,-av+\chi y\nabla_x H_r\star \overline{c},\G_r[S]\overline{c}^{v}\right)c\right)=dq\overline{c}^{v}-c.\label{meso_new}
\end{align} 

Due to  \cref{AssumpBnd,nablaH} we have that
 \begin{align}
  (-av+\chi y\nabla_x H_r\star u)\cdot v\leq &-a|v|^2+|v|\underset{B_r}{\sup}|\nabla H_r|\|u\|_{L^1(\R^d)}\nonumber\\
  =&-a+\underset{B_r}{\sup}|\nabla H_r|\nonumber\\
  \leq&0\qquad \text{in } [0,\infty)\times\R^d\times S_1\times[0,1]\qquad \text{for }\|u\|_{L^1(\R^d)}=1.\label{AssumpBnd_}
 \end{align}
Further, \cref{pos} implies that 
\begin{subequations}\label{posGop}
\begin{align}
&\G_r[S]u\geq0 \qquad\text{in }[0,\infty)\times\R^d\times\{0\},\qquad\text{ for } u\geq0,\\
&\G_r[S]u\leq0 \qquad\text{in }[0,\infty)\times\R^d\times\{1\}, \qquad\text{ for } u\geq0.
\end{align}
\end{subequations}
 Combining \cref{AssumpBnd_,posGop}, we conclude that the characteristics of the transport part of equation \cref{CTE_new} that start in  $\R^d\times V\times[0,1]$
do not leave this set. Hence, 
\begin{align}
 c=0\qquad\text{in }[0,\infty)\times\R^d\times ((\overline{B_1}\times\{0,1\})\cup (S_1\times(0,1)))\label{bc_new}
\end{align}
 are admissible boundary conditions for \cref{meso_new}.
\begin{Remark}[Rigorous treatment of \cref{CTEN2}]
 We are not aware of rigorous results on well-posedness or mean field limit for such CTEs as \cref{CTEN2}. Note that unlike standard applications arising in physics, the kernels of the integral operators we are dealing with here are not skew symmetric. 
\end{Remark}
\begin{Remark}[Solvability of \cref{meso_new}] Equation \cref{meso_new} is a generalisation of \cref{meso}. As mentioned in \cref{RemWPmeso}, the solvability of the latter  equation has not been addressed so far.
\end{Remark}
\subsection{Macroscale model}
In this Subsection we upscale \cref{meso_new}, \cref{bc_new} to obtain equations for the  macroscopic cell density. 
As in \cref{SecMeso1}, we begin by introducing a macroscopic rescaling of time and space and of functions depending on them: for $\ve\in(0,1]$ let 
\begin{align*}
 &\hat{t}=\ve^{\kappa}t,\qquad\text{for }\kappa\in\{1,2\},\\ 
 &\hat{x}=\ve x,\qquad \hat{r}=\ve r,\qquad \hat{d}=\ve d,\\
 &\hat{S}(\hat t,\hat x):=S(t,x),\qquad\hat{\chi}(\hat t,\hat x):=\chi(t,x),\qquad \hat F(\hat s):=F(s),\qquad \hat{q}(\hat x,v):=q(x,v),\\
 &\hat k^\pm(S,\hat d):=\ve^{-\mu}k^\pm(S,d)\qquad \text{for some }\mu>0,\\
 &\cep(\hat t,\hat x,v,y):=c(t,x,v,y).
\end{align*}
As before, we consider here  hyperbolic ($\kappa=1$) and parabolic ($\kappa=2$) space-time scalings. The rescaling of reaction rate constants means rescaling of $dy/dt$. A negative epsilon power is chosen to reflect the fact that the CAM binding and unbinding are the fastest among  all included processes. 
Rescaling \cref{meso_new,bc_new}  and dropping the hats leads to 
\begin{subequations}
\begin{align}
 &\nabla_{(t,x,v,y)}\cdot\left(\left(\ve^{\kappa},\ve v,-av+\ve\chi y\nabla_x H_r\star \overline{\cep},\ve^{-\mu}\G_r[S]\overline{\cep}^{v}\right)\cep\right)=dq\overline{\cep}^{v}-\cep,\label{transpceCAMs}\\
 &\cep=0\qquad\text{in }{[0,\infty)\times\R^d\times ((\overline{B_1}\times\{0,1\})\cup (S_1\times(0,1)))}.\label{bc_new_ve}
\end{align}
\end{subequations}

Following the approach from \cite{ZSMM}, we derive equations connecting some zero, first, and second moments of $\cep$. 
To begin with,  we integrate \cref{transpceCAMs} by parts with respect to $(v,y)$ over $V\times[0,1]$ using \cref{bc_new_ve} and then divide by $\ve^{\kappa}$ in order to obtain an equation which connects the macroscopic zero and first order $v$ moments:
\begin{align}
 \partial_t \overline{\cep}+\ve^{1-\kappa}\nabla_x\cdot \overline{v\cep}=0.\label{mom0}
\end{align}
Next, we multiply \cref{transpceCAMs} by $v$ and  once again integrate by parts over $V\times[0,1]$ using \cref{bc_new_ve}:
\begin{align}
\ve^{\kappa}\partial_t \overline{v\cep}+\ve\nabla_x\cdot \overline{vv^T\cep}+a\overline{v\cep}-\ve\chi \overline{y\cep}\nabla_x H_r\star \overline{\cep}=
 \frac{n}{n+1}\E[q]\overline{\cep}-\overline{v\cep}.\label{M1e}
\end{align}
Rearranging and dividing \cref{M1e} by $\ve^{\kappa-1}$ leads to 
\begin{align}
 -(a+1)\ve^{1-\kappa}\overline{v\cep}=&\ve^{2-\kappa} \nabla_x\cdot \overline{vv^T\cep}-\ve^{2-\kappa}\left(\chi \overline{y\cep}\nabla_x H_r\star \overline{\cep}+\ve^{-1}\frac{n}{n+1}\E[q]\overline{\cep}\right)+\ve\partial_t \overline{v\cep}.\label{sc2}
\end{align}
Next, we apply $(\nabla_x\cdot)$ to both sides of \cref{sc2} and plug the expression on the right-hand side into \cref{mom0}. In order to eliminate the resulting term with the mixed derivative $(\nabla_x\cdot)\partial_t$ we apply $\varepsilon^{\kappa}\partial_t$ to both sides of \cref{mom0}. Thus we arrive at the following differential equation for the macroscopic  moments of zero and second order:
\begin{align}
 \ve^{\kappa}\partial_{t^2} \overline{\cep}+(a+1)\partial_t \overline{\cep}
 =&\ve^{2-\kappa}\nabla_x\nabla_x^T:\overline{vv^T \cep}-\ve^{2-\kappa}\nabla_x\cdot\left(\chi \overline{y\cep}\nabla_x H_r\star \overline{\cep}+\ve^{-1}\frac{n}{n+1}\E[q]\overline{\cep}\right).\label{mom012}
\end{align}

Now we are ready to start the limit procedure. Let $$c^{0}:=\underset{\ve\rightarrow0}{\lim} \, \cep.$$
Sending $\ve$ to zero in \cref{bc_new_ve}, we obtain
\begin{align}
 c^0=0\qquad\text{in }[0,\infty)\times\R^d\times ((\overline{B_1}\times\{0,1\})\cup (S_1\times(0,1))).\label{bc0}
\end{align}
Multiplying  \cref{transpceCAMs} by $\ve^{\mu}$ and passing to the limit as $\ve\to0$ yields an equation for $c^0$: 
\begin{align}
 \partial_y\left(\G_r[S]\overline{c^0}^{v}c^0\right)=0.\label{eqy}
\end{align}
Integrating \cref{eqy} using \cref{bc0}   yields
\begin{align}
 \G_r[S]\overline{c^0}^{v}c^0\equiv0.\label{eqGr}
\end{align} 
To resolve \cref{eqGr} with respect to $c^0$, we 
first need to study equation 
\begin{align}
\G_r[S]u=0   \label{EqGu}                 \end{align}
 for a given function $u$. Combining \cref{kpm0} and \cref{G}-\cref{Gpm}, $\G_r[S]u$ can be expressed in terms of moments:
\begin{subequations}
\begin{align}
 \G_r[S]u(\cdot,\cdot,y')
 =&\G_r^{+}[S](\overline u^y-\overline{yu}^y)-y'\left(\G_r^{+}[S](\overline u^y-\overline{yu}^y)+\G_r^{-}[S]\overline{yu}^y\right),\label{CG}\\ 
 \G_r^{{\pm}}[S]u(t,x):=&\int_{\R^d}G_{{r}}^{{\pm}}[S](t,x,x')u(t,x')\,dx'.
\label{Ga}
\end{align}
\end{subequations}
Using \cref{CG}, we can resolve \cref{EqGu} with respect to   variable $y$ and obtain
\begin{subequations}
\begin{align}
 &\G_r[S]u(\cdot,\cdot,y_*)=0\qquad\Leftrightarrow\qquad y_*=\Y_r[S]\left(\overline u^y,\overline{yu}^y\right)
,\label{ystar}\\
 &\Y_r[S](\mu_0,\mu_1):=\frac{\G_r^{+}[S](\mu_0-\mu_1)}{\G_r^{+}[S](\mu_0-\mu_1)+\G_r^{-}[S]\mu_1},
 \label{Ystar}
\end{align}
\end{subequations}
provided that the denominator of the fraction on the right-hand side of \cref{Ystar} is non-zero. 
Given that $k^{\pm}\geq0$, we have that 
\begin{align}
 \G_r^{+}[S](\mu_0-\mu_1),\G_r^{-}[S]\mu_1\geq0\qquad\text{for }\mu_0\geq\mu_1\geq0,\label{ineqmu00}
\end{align}
and, moreover, if 
\begin{align}
 \G_r^{+}[S](\mu_0-\mu_1)>0\qquad\text{or}\qquad\G_r^{-}[S]\mu_1>0\label{ResolvCond}
\end{align}
as well, then $\Y_r[S](\mu_0,\mu_1)$ is well-defined and 
\begin{align}
 \Y_r[S](\mu_0,\mu_1)\in[0,1].\nonumber
\end{align}
Since $y\in[0,1]$, inequalities in \cref{ineqmu00} are satisfied for 
\begin{align}
 \mu_0:=\overline u^y,\qquad \mu_1:=\overline{yu}^y.\nonumber
\end{align}
Assuming that for $u:=\overline{c^0}^{v}$ condition \cref{ResolvCond} is satisfied, i.e. that 
\begin{align}
 \mu_0:=\overline{c^0},\qquad\mu_1:=\overline{yc^0}\nonumber
\end{align}
satisfy \cref{ResolvCond}, we can resolve \cref{eqGr} with respect to the $y$-variable and obtain
\begin{align}
  c^0(\cdot,\cdot,\cdot,dy)=\overline{c^0}^y\delta_{\Y_r[S]\left(\overline{c^0},\overline{yc^0}\right)}(y).\label{c0delta}
\end{align}  
Multiplying \cref{c0delta} by $y$ and integrating over $V\times[0,1]$, we arrive at an equation that connects the macroscopic zero and first order $y$ moments of $c^0$:
\begin{align}
 &\overline{yc^0}=\overline{c^0}\Y_r[S]\left(\overline{c^0},\overline{yc^0}\right).\label{my1}
\end{align}

Next, we integrate \cref{transpceCAMs} with respect to $y$ over $[0,1]$ using \cref{bc_new_ve} and pass to the limit as $\ve\rightarrow0$ to obtain
\begin{align}
 -a\nabla_{v}\cdot\left(v\overline{c^0}^y\right)=dq\overline{c^0}-\overline{c^0}^y.\label{eqmu}
\end{align}
This equation can be resolved with  respect to $\overline{c^0}^y$ using the method of characteristics (a similar case was treated in \cite{ZSMM}). The solution reads 
\begin{align}
 \overline{c^0}^y=\overline{c^{0}}q\Cr{xi1},\label{solc0}
 \end{align}
 where
\begin{align}
 \Cl[xi]{xi1}(v)=\begin{cases}
 \frac{d}{da-1}\left(|v|^{-d+\frac{1}{a}}-1\right)&\text{for }a\neq \frac{1}{d},\\
 -\frac{d}{a}\ln|v|&\text{for }a=\frac{1}{d}.\end{cases}\label{xi1l}
\end{align}
Altogether, combining \cref{c0delta,solc0}, we obtain a formula  for $c^0$ in terms of some of its macroscopic moments:
\begin{align}
 c^0(\cdot,\cdot,\cdot,y)=\overline{c^{0}}q\Cr{xi1}\delta_{\Y_r[S]\left(\overline{c^0},\overline{yc^0}\right)}(y).
\end{align}

Next, we multiply \cref{eqmu} by $v$ and $vv^T$, respectively, and integrate by parts over $V$ using \cref{bc0} to obtain formulas for  macroscopic moments of order one,
\begin{align}
 &a\overline{vc^{0}}=\frac{d}{d+1}\E[q]\overline{c^{0}}-\overline{vc^{0}}\nonumber\\
 \Leftrightarrow\qquad & \overline{vc^{0}}=\frac{1}{a+1}\frac{d}{d+1}\E[q]\overline{c^{0}},\label{M1}
\end{align}
and two,
\begin{align}
 &2a\overline{vv^Tc^{0}}=d\overline{vv^Tq}\,\overline{c^{0}}-\overline{vv^Tc^{0}}\nonumber\\
 \Leftrightarrow\qquad &
 \overline{vv^Tc^{0}}=\frac{1}{2a+1}\frac{d}{d+2}\D[q]\overline{c^{0}}.\label{M2}
\end{align}
Passing formally to the limit in \cref{mom0} and using \cref{M1} we arrive at the CTE 
\begin{align}
 (a+1)\partial_t \overline{c^{0}}+\frac{d}{d+1}\nabla_x\cdot\left(\overline{c^{0}}\E[q]\right)=0\qquad\text{if }\kappa=1.\label{HL}
\end{align}
This is typical for hyperbolic scaling. 
Now we address the case of parabolic scaling. Passing formally to the limit in \cref{mom012}, using \cref{M2}, and recalling \cref{my1}, we arrive at 
\begin{subequations}\label{MacroModel}
\begin{align}
&(a+1)\partial_t\overline{c^{0}}=\frac{1}{2a+1}\frac{d}{d+2}\nabla_x\nabla_x^T: \left(\D[q]\overline{c^{0}}\right)-\nabla_x\cdot\left( \chi \overline{yc^{0}}\nabla_x H_r\star \overline{c^{0}}\right),\\
&\overline{yc^0}=\overline{c^0}\Y_r[S]\left(\overline{c^0},\overline{yc^0}\right)\label{momy01}\\
&\text{if }\kappa=2\text{ and }\E[q]\equiv0.\nonumber
\end{align}
\end{subequations}

Similar to \cref{PL_}, equation \cref{MacroModel} for the macroscopic cell density, $\overline{c^0}$, includes myopic diffusion and non-local adhesion.  This time, however, the sensitivity to the adhesion force acting on the cells is proportional to the amount of bounded CAMs, $\overline{yc^0}$, and, thus, to that of the adhesion bonds formed by the cells.  

Equation \cref{momy01} appears to be of a new type. In general, it is non-local and non-linear  and cannot be explicitly resolved for $\overline{yc^0}$. 
Our next Example deals with a special case where the equation is local and easy to solve.
\begin{Example}
 Let us assume that $k^{\pm}$'s  are singular measures rather than functions and of the form 
 \begin{align}
  k^{\pm}(S,\rho):=K^{\pm}(S)\delta_{0}(\rho).\label{kpmlocal}
 \end{align}
 This choice  would correspond to the impossible situation where binding would only occur locally. 
For $k^{\pm}$ from  \cref{kpmlocal}, operators $\G_r^{\pm}[S]$, as defined by \cref{Ga,Gpm}, turn into local multiplication operators:
 \begin{align*}
  \G_r^{\pm}[S]u=K^{\pm}(S)u,
 \end{align*}
and equation \cref{momy01} can be easily resolved:
\begin{align}
 \overline{yc^0}=\frac{(K^{+})^{\frac{1}{2}}}{(K^-)^{\frac{1}{2}}+(K^+)^{\frac{1}{2}}}(S)\overline{c^0}.
\end{align}
\end{Example}

\begin{Remark}[Solvability of \cref{MacroModel}]\label{RemSolvMM}
 System \cref{MacroModel} is  strongly coupled. The presence of the non-linear non-local equation \cref{momy01} of a new type as well as of a potentially degenerate myopic diffusion makes its analysis challenging. We establish a result on local well-posedness for a variant of this system which includes a quasilinear degenerate diffusion in divergence form rather than a myopic one in a coming paper. We expect that our analysis there will indicate a reliable  approach to a numerical treatment of \cref{MacroModel}.
 \end{Remark}
\begin{Remark}[Rigorous upscaling of \cref{meso_new}]
 As in the case of  \cref{meso}, the upscaling provided above for \cref{meso_new} is just formal, see \cref{Remupscale} for the comparison with the case we studied earlier in \cite{ZSMM}.
\end{Remark}

\section{Simulations}
\label{SecNum}
In this Section, we present some simulation results for  equation \cref{PL_2} and its non-myopic modification on a one-dimensional interval. Several options have been considered in the literature  regarding appropriate boundary conditions and treatment of the integrand of ${\cal A}_r$ in the area where the argument falls outside a bounded spatial domain \cite{ButtenHillen}. Here we assume no-flux boundary conditions and extend the integrand in ${\cal A}_r$ by zero. Denoting $u:=\overline{c^{0}}$, we solve numerically the IBVP 
\begin{subequations}\label{testeq}
\begin{alignat}{3}
&2\partial_tu(t,x)=\frac{1}{9}\partial_x\left((D_0+\delta x)\partial_xu(t,x)\right)\nonumber\\
&\phantom{2\partial_tu(t,x)}+\partial_x \left(u(t,x)\left(\theta\frac{1}{9}\delta-\frac{1}{2}\chi_0\int_{(-1,1)\cap(-x,-x+6)}u(t,x+\xi)\sign(\xi)\,d\xi\right)\right)&&\text{ in }
(0,25]\times(0,6)\label{PL_21D},\\
&\text{no-flux boundary conditions }&&\text{ on }(0,25]\!\times\!\{0,6\},\\
&u_0= 5&&\text{ in }\{0\}\times(0,6).
\end{alignat}
\end{subequations}
Our setting here could mimic a hypothetical in vitro experiment that starts with loss of strong cohesion in a piece of epithelium where cells undergo a partial EMT associated with cancer progression. Such cells acquire increased migratory properties while retaining some cell-cell adhesiveness. This scenario leads to enhanced migration, partly coordinated by cell-cell adhesion, and may result in  formation of cell aggregates. In the present experiment, we assume the extracellular matrix (ECM) to be mildly heterogeneous and not very dense. This is comparable with ECM found in healthy epithelium prior to tissue remodelling by cancer cells.  Therefore, a one-dimensional numerical set-up, as discussed below, provides a suitable description of an early stage of cancer invasion under uncomplicated ECM topology. Simplifying spatial complexity and thus reducing the number of key parameters helps elucidate the interplay between myopic diffusion and cell-cell adhesion components of migration at this stage.

All parameter values used are collected in \cref{TablePar}. {Their selection has been made for illustrative purposes and is not guided by any particular application.}
The obtained results are discussed in \cref{SecNRes} and the simulation approach is explained in \cref{SecNAlg}.

\subsection{Results} \label{SecNRes}
\begin{table}[!tbp]
 \centering
 \begin{subtable}[t]{0.45\textwidth}
 \centering
\begin{tabular}{r|l}
final time& $25$\\
$d$& $1$\\
spatial domain & $(0,6)$\\
$r$& $1$\\
$a$& $1$\\
$F(x)$& $1$\\
$D[q](x)$& $D_0+\delta x$\\
$\chi(x)$& $\chi_0$\\
$u_0(x)$&$5$
\end{tabular}
\caption{Fixed model parameters.}\label{TablePar1}
\end{subtable}
\hfill
\begin{subtable}[t]{0.45\textwidth}
 \centering
 \begin{tabular}{r|l}
  $D_0$&$\{0.15,3,5,8\}$\\
  $\delta$&$\{0,1\}$\\
  $\chi_0$&$\{0.5,1\}$\\
 {$\theta$} &  {$\{0,1\}$}
 \end{tabular}
\caption{Varying model  parameters.}
 \label{TablePar2}
\end{subtable}
\\[1em]
\begin{subtable}[t]{\textwidth}
 \centering
 \begin{tabular}{r|l}
 size of time integration interval & $0.01$\\
 size of time mesh for \texttt{pdepe}&$0.01\cdot \frac{1}{41}$\\ 
 size of spatial mesh for \texttt{pdepe}& $0.01$
 \end{tabular}
 \caption{Mesh sizes.}\label{ParMesh}
\end{subtable}
\caption{Parameters used in simulations of \cref{PL_2}.}\label{TablePar}
\end{table}
We singled out the following test parameters: the minimum value of the diffusion coefficient, $D_0$, a diffusion perturbation parameter, $\delta$,  the constant adhesion sensitivity, $\chi_0$, and a binary parameter $\theta$ that if set to be zero renders the diffusion non-myopic. These variables allow us to adjust  the magnitudes of three distinct flux components: the canonical diffusion along the spatial density gradient with diffusion coefficient $\frac{1}{2\cdot9}(D_0+\delta x)$, the advection to the left with a constant speed $\theta\frac{1}{2\cdot 9}\delta$, and the non-local adhesion with strength $\frac{1}{2\cdot 2}\chi_0$ and the distance-independent adhesion force with $F\equiv1$. The goal of our numerical study here is to understand how the combination of the three motion effects can unfold. 

Since initially the  population is homogeneously distributed over the spatial domain, it would have remained so in the absence of adhesion.
In our simulations, the adhesion strength $\chi_0$ is strictly positive, so that adhesion is always present and, as expected, it promotes the formation of aggregates. 

In our first series of simulations we take $\theta=1$, meaning that diffusion is myopic.
The scenario involving a constant diffusion coefficient, i.e. $\delta=0$, was previously examined in \cite{ButtenHillen}, and we regard it as the control case. The results are shown in the first two columns of \cref{FigureSim1}. In addition to a single aggregate or a pair of aggregates displayed in \cite{ButtenHillen} we also found, for a weak  diffusion corresponding to  $D_0=0.15$ that three very tight aggregates can be observed for $\chi_0=0.5$ in \cref{SF10.150.50}. For $\chi_0=1$, they are about twice as dense, see \cref{SF10.1510}. Comparing the plots in the first column of \cref{FigureSim1}, we see that  substantially increasing the diffusion coefficient  diminishes the density of the aggregates, slows down their formation, and reduces their number. The latter can occur early on or later in time, as, e.g.  \cref{SF1510,SF1310} respectively convey. The aggregates are eventually spaced more than one unit apart, as could be expected for $r=1$. Considering that the adhesion effect increases with growing density inside the sensing region, the impact of a density-independent diffusion is most noticeable during the initial accumulation phase. If diffusion brings density accumulations close enough, adhesion will cause them to merge, as shown, e.g. in  \cref{SF180.50}.

Next, we introduce a linear perturbation into the diffusion coefficient by taking $\delta=1$. The results are shown in the last two columns of \cref{FigureSim1}. This symmetry-breaking effect both increases diffusion, most notably on the right half of the spatial domain, and adds advection to the left. Already for small $D=0.15$, three aggregates can no longer be sustained, with the accumulations in the middle and to the right merging into a single aggregate, see, e.g.  \cref{SF10.1511}. Depending on $D_0$ and $\chi_0$, we observe either two aggregates with the left one now being tighter than the right one, compare, e.g. \cref{SF1511} with \cref{SF1510}, or a single aggregate emerging sooner or later in the left half of the domain, as can be observed, e.g. in \cref{SF180.51}.

Now we set $\theta=0$ and $\delta=1$. This eliminates the constant-speed advection to the left, rendering the diffusion non-myopic while still retaining the spatial heterogeneity in the diffusion coefficient. The numerical results for this case are presented in the odd columns of \cref{FigureSim2} next to the corresponding graphs for $\theta=1$ in the even columns of the same Figure.  The comparison of the plots reveals that advection due to diffusion being myopic primarily serves to preserve the position of the aggregates. Only for $D_0=8$ and $\chi_0=0.5$ it is different, see \cref{SF280.51,SF380.51}. Here in the absence of the density-independent advection in the left direction a tight aggregate fails to form. This might be because the density accumulation fails to reach the left half of the domain and hence stays under the influence of a  strong diffusion that precludes a substantial aggregation.

\begin{figure}%
\foreach \Diff[count=\i] in {8,5,3,0.15}{%
\centering
\adjustbox{raise=1.8cm}{\begin{minipage}{0.01\textwidth}
\rotatebox{90}{$D_0=$\Diff}
\end{minipage} } 
\foreach \T in {25}{%
\foreach \delt in {0,1}{%
\foreach \chiz in {0.5,1}{%
\IfFileExists{MATLAB/1D_Myopic/T=\T,D=\Diff,delta=\delt,chi0=\chiz.jpeg}{%
\begin{subcaptionblock}{0.23\textwidth}
\centering
\begin{overpic}[width=1.1\textwidth]{MATLAB/1D_Myopic/T=\T,D=\Diff,delta=\delt,chi0=\chiz.jpeg}
\put(20,80){\ifnum \i=1 $(\chi_0,\delta)=(\chiz,\delt)$\fi}
\end{overpic}
\caption{}
\label{SF1\Diff\chiz\delt}
\end{subcaptionblock}
  }{%
  }
}}
}
\\
}
\caption{Kymographs of numerical solutions of \cref{testeq} for $\theta=1$ and  various values of $D_0,\chi_0$, and $\delta$.}             
\label{FigureSim1}
\end{figure}

\begin{figure}%
\foreach \Diff[count=\i] in {8,5,3,0.15}{%
\centering
\adjustbox{raise=1.8cm}{\begin{minipage}{0.01\textwidth}
\rotatebox{90}{$D_0=$\Diff}
\end{minipage} } 
\foreach \T in {25}{%
\foreach \chiz in {0.5,1}{%
\IfFileExists{MATLAB/1D_Myopic/T=\T,D=\Diff,delta=1,chi0=\chiz.jpeg}{%
\begin{subcaptionblock}{0.23\textwidth}
\centering
\begin{overpic}[width=1.1\textwidth]{MATLAB/1D_Myopic/T=\T,D=\Diff,delta=1,chi0=\chiz.jpeg}
\put(20,80){\ifnum \i=1 $(\chi_0,\theta)=$(\chiz,1) \fi}
\end{overpic}
  \caption{}
\label{SF2\Diff\chiz1}
\end{subcaptionblock}
  }{%
  }
\IfFileExists{MATLAB/1D_NonMyopic/T=\T,D=\Diff,delta=1,chi0=\chiz.jpeg}{%
\begin{subcaptionblock}{0.23\textwidth}
\centering
\begin{overpic}[width=1.1\textwidth]{MATLAB/1D_NonMyopic/T=\T,D=\Diff,delta=1,chi0=\chiz.jpeg}
\put(10,80){\ifnum \i=1 $(\chi_0,\theta)=$(\chiz,0)  \fi}
\end{overpic}
  \caption{}
\label{SF3\Diff\chiz1}
\end{subcaptionblock}
  }{%
  }
}
}
\\
}
\caption{Kymographs of numerical solutions to \cref{testeq} for $\theta=1$ (odd columns) and $\theta=0$ (even columns), $\delta=1$, and various values of $D_0$ and $\chi_0$.}              
\label{FigureSim2}
\end{figure}

\subsection{Method}\label{SecNAlg}
Our solver for \cref{testeq} was written in MATLAB \cite{MATLAB}. 
In this Subsection, we briefly describe our numerical scheme. To start, we decomposed the time interval $[0,25]$ into intervals of equal size of $0.01$. Let $t_k$, $k\in\{0,1,\dots,25\times 100\}$, be the  corresponding mesh points. Starting with $\tilde u_0:=u_0$, we successively determined  approximations $\tilde u_k$ of $u(t_k,\cdot)$ by solving numerically the IBVP
\begin{subequations}\label{deqeq}
\begin{alignat}{3}
&2\partial_t\tilde u(t,x)=\frac{1}{9}\partial_x\left((D_0+\delta x)\partial_x\tilde u(t,x)\right)\nonumber\\
&\phantom{2\partial_t\tilde u(t,x)}\!\!\!\!+\partial_x \left(\tilde u(t,x)\left(\theta\frac{1}{9}\delta-\frac{1}{2}\chi_0\int_{(-1,1)\cap(-x,-x+6)}\tilde u_k(x+\xi)\sign(\xi)\,d\xi\right)\right)&&\text{ in }
\left(0,0.01\right]\times(0,6)\label{PDE1},\\
&\text{no-flux boundary conditions }&&\text{ on }\left(0,0.01\right]\!\times\!\{0,6\},\\
&\tilde u= \tilde u_k&&\text{ in }\{0\}\times(0,6),
\end{alignat}
\end{subequations}
and then set
\begin{align*}
 \tilde u_{k+1}:=\tilde u(0.01,\cdot).
\end{align*}
To solve the implicit-explicit equation  \cref{deqeq}, we used MATLAB's function \texttt{pdepe} \cite{pdepe}.
 The sizes of the equidistant  meshes that we passed to \texttt{pdepe} were $1/41\cdot 0.01$ for time and $0.01$ for space. To apply \texttt{pdepe} in this case, one needs first to produce the function described by the non-local term, as it is part of the expression for the flux. We  discretised that integral term for the entire spatial domain    replacing it by the right Riemann sums
\begin{align}
 &s_m:=0.01\sum_{l=a(m)}^{b(m)}\tilde u_k\left(0.01(m+l)\right)\sign\left(0.01l\right),\label{Rsum}
 \end{align}
 where
 \begin{align}
 &a(m):=\max\{-99,-m\},\quad  b(m):=\min\{100,-m+100\cdot 6\},\qquad m\in\{0,\dots,100\cdot 6\},\nonumber 
\end{align}
and then interpolated these values  using MATLAB's function \texttt{griddedInterpolant} \cite{interp} to create a function of $x$. To compute the sums \cref{Rsum}, we applied MATLAB's function \texttt{xcorr} \cite{xcorr} as follows. Let 
\begin{align*}
 &\omega:=(\underbrace{-1,\dots,-1}_{(100-1)\text{ times}},0,\underbrace{1,\dots,1}_{100\text{ times}}),\\
 &u:=(\tilde u_k(0),\tilde u_k(0.01),\dots, \tilde u_k(6)),\\
 &w:=\texttt{xcorr}(u,\omega),
\end{align*}
then 
\begin{align*}
 (s_0,\dots,s_{100\cdot 6})=0.01\left(w_{(100\cdot 6+1)-(100-1)},\dots,w_{2(100\cdot 6+1)-100}\right).
\end{align*}
\section{Discussion}
\label{SecDisc}
CCA plays a pivotal role in the development and functioning of multicellular organisms. Notably, it regulates cell migration, either promoting or inhibiting it. 
Macroscopic mathematical models can contribute to a better understanding of adhesion effects because they are amenable to both rigorous mathematical analysis and {\it in silico} studies, and numerical results for these models can be compared to medical images.

In this paper we aimed at devising a new multiscale approach to modelling cell migration driven by such effects as CCA and anisotropic diffusion.  After reviewing  previously available approaches in \cref{SecIntroMod},  we derived in \cref{SecBasis}  classes of  IPDE models  containing non-local adhesion and myopic diffusion,  the classical model in \cite{Armstrong2006} being their special case. We further extended our approach in \cref{SecCAMs} where we derived a novel model \cref{MacroModel} that can account for subcellular binding dynamics of CAMs, molecules responsible for cells sticking to each other.

Our modelling can serve as a  starting point for considerably more realistic models for adhesion-driven motion. 
One of our simplifying modelling assumptions, \cref{Ass1} in \cref{SecCAMs}, was that all junctions are cell-cell junctions of one and the same type. The main application that we had in mind here were the adherens junctions such as facilitated by E-cadherins, see \cref{IntroBB}. In reality, cells form a variety of junctions and, moreover, subpopulations with distinct adhesion properties can be involved. Our approach could be extended to accommodate such complexities.

The usefulness of such models as \cref{MacroModel} hinges on their solvability. As previously announced in \cref{RemSolvMM}, we prove local solvability, yet for a quasilinear non-myopic degenerate diffusion, in a coming paper. We postpone to that work a discussion of challenges that arise in connection with treatment of equation \cref{momy01}.
While non-myopic diffusion is often adopted in modelling cell motion, our simulation results in \cref{SecNum} underscore the difference that a myopic diffusion can make.  
We are going to settle the solvability for \cref{MacroModel} in a future work.
\section*{Acknowledgement}
\addcontentsline{toc}{section}{Acknowledgement}
\begin{itemize}
\item The authors thank Christina Surulescu {(RPTU Kaiserslautern-Landau)} for stimulating discussions.
\item 
The authors were supported by the Engineering and Physical Sciences Research Council [grant number
EP/T03131X/1].
\item For the purpose of open access, the authors have applied a Creative Commons Attribution (CC BY) licence to any Author Accepted Manuscript version arising.
\item All data is provided in full in \cref{SecNum} of this paper.                                                            \end{itemize}

\phantomsection
\addcontentsline{toc}{section}{References}
\printbibliography

\begin{appendices}
\section{}\label{SecApp}
 The following Lemma extends the result that is well-known for skew-symmetric kernels $K$, see, e.g. {\cite[Chapter 3 \S2 Theorem 3.2.1(b)]{Golse}}. The proof, which is very similar, is provided for the reader's convenience.  
\begin{Lemma}\label{LemA1}
 Let $m,N\in\N$. Let $A\in C([0,\infty)\times\R^m;\R^m)$ and $K\in C([0,\infty)\times\R^m\times \R^m; \R^m)$. Define 
 \begin{align}
 &\K\mu(t,Z):=\int_{\R^m}K(t,Z,Z')\,\mu(dZ'),\qquad \tr K(t,Z):=K(t,Z,Z),\qquad Z\in\R^m.\label{KKern}
\end{align}
Consider the ODE system
 \begin{align}
  \frac{dZ_i}{dt}=\frac{1}{N}\sum_{\underset{j\neq i}{j=1}}^NK(\cdot,Z_i,Z_j)+A(\cdot,Z_i).\label{Veq}
 \end{align}
Let $(Z_1^T,\dots,Z_N^T)^T\in C^1([0,T);\R^{mN})$ be a solution to \cref{Veq} for some $T\in[0,\infty]$. Then, the empirical measure
\begin{align*}
 \mu_N:=\frac{1}{N}\sum_{i=1}^N\delta_{Z_i}
\end{align*}
satisfies 
\begin{align}
 \partial_t \mu_N+\nabla_z\cdot \left(\left(\K\mu_N-\frac{1}{N}\tr K+A\right)\mu_N\right)=0\label{MeanFieled}
\end{align}
on $(0,T)\times \R^m$ in the distributional sense.

\end{Lemma}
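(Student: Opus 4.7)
The plan is to test the candidate equation \cref{MeanFieled} against an arbitrary $\varphi\in C_c^\infty((0,T)\times\R^m)$ and verify it by a direct chain-rule calculation, exploiting the fact that $\mu_N$ is a sum of Dirac masses so that all integrations against $\mu_N$ reduce to finite sums.

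First, since $(Z_1,\dots,Z_N)\in C^1([0,T);\R^{mN})$ and $\varphi\in C_c^\infty$, the map $t\mapsto\langle\mu_N(t,\cdot),\varphi(t,\cdot)\rangle=\frac{1}{N}\sum_{i=1}^N\varphi(t,Z_i(t))$ is $C^1$ on $(0,T)$. The classical chain rule yields
\begin{align*}
\frac{d}{dt}\langle\mu_N,\varphi\rangle
=\frac{1}{N}\sum_{i=1}^N\partial_t\varphi(t,Z_i)
+\frac{1}{N}\sum_{i=1}^N\nabla_z\varphi(t,Z_i)\cdot\frac{dZ_i}{dt}.
\end{align*}
Substituting \cref{Veq} into the second sum produces a double sum involving $K(t,Z_i,Z_j)$ over all $j\neq i$, plus a single sum involving $A(t,Z_i)$.

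The key algebraic step will be to add and subtract the diagonal term $j=i$ in the double sum: for each fixed $i$,
\begin{align*}
\frac{1}{N}\sum_{\underset{j\neq i}{j=1}}^N K(t,Z_i,Z_j)
=\frac{1}{N}\sum_{j=1}^N K(t,Z_i,Z_j)-\frac{1}{N}K(t,Z_i,Z_i)
=(\K\mu_N)(t,Z_i)-\frac{1}{N}\tr K(t,Z_i),
\end{align*}
by the definitions in \cref{KKern}. Substituting this identity back gives
\begin{align*}
\frac{d}{dt}\langle\mu_N,\varphi\rangle
=\langle\mu_N,\partial_t\varphi\rangle
+\left\langle\mu_N,\nabla_z\varphi\cdot\left(\K\mu_N-\tfrac{1}{N}\tr K+A\right)\right\rangle,
\end{align*}
where in the second term the inner pairing is again just a finite sum evaluated at the $Z_i$'s, so no regularity beyond continuity of $K$, $A$ is required.

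Finally, to recover the distributional form \cref{MeanFieled}, integrate the above identity in time against $\varphi$ compactly supported in $(0,T)$: the left-hand side vanishes after integration (total derivative of a compactly supported function), which gives
\begin{align*}
\int_0^T\langle\mu_N,\partial_t\varphi\rangle\,dt
+\int_0^T\left\langle\mu_N,\nabla_z\varphi\cdot\left(\K\mu_N-\tfrac{1}{N}\tr K+A\right)\right\rangle dt=0,
\end{align*}
which is exactly \cref{MeanFieled} read in the distributional sense. There is no serious obstacle: the argument reduces to the chain rule plus the diagonal-correction bookkeeping, which is precisely the point at which one must be careful, since it is the source of the novel $-\frac{1}{N}\tr K$ term absent in the classical skew-symmetric setting treated in \cite{Golse}.
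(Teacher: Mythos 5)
Your proof is correct and follows essentially the same route as the paper's: the crucial step in both is adding and subtracting the diagonal term $j=i$ to convert $\frac{1}{N}\sum_{j\neq i}K(\cdot,Z_i,Z_j)$ into $\K\mu_N(\cdot,Z_i)-\frac{1}{N}\tr K(\cdot,Z_i)$, which produces the $-\frac{1}{N}\tr K$ correction. The only cosmetic difference is that you verify the single-particle transport identity directly by the chain rule against a test function, whereas the paper cites the known fact that each $\delta_{Z_i}$ solves the corresponding CTE and then uses $\varphi\delta_{\bar Z}=\varphi(\bar Z)\delta_{\bar Z}$ before averaging over $i$.
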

\begin{proof}
Throughout the proof, equations are understood to be satisfied in the sense of distributions on $(0,T)\times \R^m$.

Let $i\in\{1,\dots,N\}$. From the theory of CTEs it is known that
  (see, e.g. \cite{Golse}) that  
  \begin{align}
   \partial_t \delta_{Z_i}+\nabla_z\cdot\left(\left(\frac{1}{N}\sum_{\underset{j\neq i}{j=1}}^NK(\cdot,\cdot,Z_j)+A\right)\delta_{Z_i}\right)=0.\label{CTi_}
  \end{align}
   Using \cref{KKern} and the property $$\varphi\delta_{\bar Z}=\varphi(\bar Z)\delta_{\bar Z}$$ which holds for any $\bar Z\in\R^m$ and continuous function $\varphi=\varphi(Z)$, we can rewrite \cref{CTi_} as follows:
  \begin{align}
  0=&\partial_t \delta_{Z_i}+\nabla_z\cdot\left(\left(\frac{1}{N}\sum_{\underset{j\neq i}{j=1}}^NK(\cdot,\cdot,Z_j)+A\right)\delta_{Z_i}\right)\nonumber\\
=&\partial_t \delta_{Z_i}+\nabla_z\cdot\left(\left(\frac{1}{N}\sum_{{j=1}}^N\K \delta_{Z_j}-\frac{1}{N}\tr K+A\right)\delta_{Z_i}\right)\nonumber\\
=&\partial_t \delta_{Z_i}+\nabla_z\cdot\left(\left(\K \mu_N-\frac{1}{N}\tr K+A\right)\delta_{Z_i}\right).
   \label{CTi}
  \end{align}
 Computing the arithmetic  mean with respect to $i$ on both sides of \cref{CTi} yields \cref{MeanFieled}.
\end{proof}

\end{appendices}

\end{document}